\definecolor {processblue}{cmyk}{0.96,0,0,0}
\theoremstyle{definition} 
\newtheorem{definition}{Definition}[section] 
\theoremstyle{plain} 
\newtheorem{observation}{Observation}[section]
\definecolor {processblue}{cmyk}{0.96,0,0,0}
\newcommand*{\nc}[2]{#1\mathbin{\left| \sim \vphantom{#1#2} \right.}#2}%
\newcommand{\tright}{\mathrel\triangleright}
\DeclareRobustCommand{\btleft}{\mathrel{\mathpalette\btlr@\blacktriangleleft}}
\DeclareRobustCommand{\btright}{\mathrel{\mathpalette\btlr@\blacktriangleright}}
\newcommand{\btlr@}[2]{%
  \begingroup
  \sbox\z@{$\m@th#1\triangleright$}%
  \sbox\tw@{\resizebox{1.1\wd\z@}{1.1\ht\z@}{\raisebox{\depth}{$\m@th#1\mkern-1mu#2$}}}%
  \ht\tw@=\ht\z@ \dp\tw@=\dp\z@ \wd\tw@=\wd\z@
  \copy\tw@
  \endgroup
}
\newcommand\bigDiamond{\mathop{\mathpalette\bigDi@mond\relax}}
\newcommand\bigDi@mond[2]{%
  \vcenter{\hbox{\m@th
    \scalebox{\ifx#1\displaystyle 2\else1.2\fi}{$#1\Diamond$}%
  }}%
}
\title{A Proof System with Causal Labels (Part I): checking Individual Fairness and Intersectionality}
\author{Leonardo Ceragioli and Giuseppe Primiero\footnote{LUCI Lab, Department of Philosophy, Università degli Studi di Milano}} 
\date{}
\begin{document}

\maketitle

\begin{abstract}
In this article we propose an extension to the typed natural deduction calculus \textbf{TNDPQ} to model verification of counterfactual fairness in probabilistic classifiers. This is obtained formulating specific structural conditions for causal labels and checking that evaluation is robust under their variation.
\end{abstract}


\section{Introduction}

The calculus \textbf{TPTND} (\textit{Trustworthy Probabilistic Typed Natural Deduction} \cite{10.1093/logcom/exaf003,KUBYSHKINA2024109212}) is designed to evaluate \textit{post-hoc} the trustworthiness of the behavior of opaque systems. The system is implemented for verification of dataframes in the tool BRIO \cite{DBLP:conf/beware/CoragliaDGGPPQ23,coraglia2024evaluatingaifairnesscredit}. In \cite{CeragioliPrimiero2025}, we introduced \textbf{TNDPQ} (\textit{Typed Natural Deduction for Probabilistic Queries}), a variation of the previous system in which a probabilistic output is associated to a target variable when a Data Point -- consisting of a list of values attributions for a set of variables -- is provided.
In this paper, we extend this system with tools to verify individual fairness of classifiers.

We start with a formal definition of classifiers.
Let $\mathscr{A}$ be a set of protected variables $ a_{1} , \ldots , a_{n}$, $\mathscr{X}$ be a (disjoint) set of non-protected variables $ x_{1} , \ldots , x_{n}$ and $t$ be a target variable.
Moreover, let $\mathscr{V}_{a_{i}}$ be a set of values $\alpha ^{i_{1}}, \alpha ^{i_{2}}, \ldots , \alpha ^{i_{j}}$ that $a_{i}$ can receive, $\mathscr{V}_{A}$ the set of all $\mathscr{V}_{a_{i}}$, $\mathscr{V}_{x_{i}}$ be a set of values $\beta ^{i_{1}}, \beta ^{i_{2}}, \ldots , \beta ^{i_{j}}$ that $x_{i}$ can receive, $\mathscr{V}_{X}$ the set of all $\mathscr{V}_{x_{i}}$, and $\mathscr{V}_{t}$ the set $\delta ^{1}, \delta ^{2}, \ldots , \delta ^{j}$ of values that $t$ can receive.
Let us use $ v_{1} , \ldots , v_{n}$ to denote elements of $\mathscr{A} \cup \mathscr{X}$ (that is, variables regardless of their protected or unprotected status), and $\gamma ^{i_{1}}, \ldots , \gamma ^{i_{j}}$ to denote the values that $v_{i}$ can receive.
We use $a_{i}:\alpha ^{i_{j}}$ (respectively $x_{i}:\beta ^{i_{j}}$) to express the \textit{judgment} that variable $a_{i}$ receives value $\alpha ^{i_{j}}$ (respectively, variable $x_{i}$ receives value $\beta ^{i_{j}}$), and $t:\delta ^{1}_{p_{1}}, \ldots ,\delta ^{j}_{p_{j}} $ to express the \textit{probabilistic judgment} that $\delta ^{1}, \ldots , \delta ^{j}$ are all the possible values that variable $t$ can receive and that, for $1 \leq k \leq j$, it receives value $\delta ^{k}$ with probability $p_{k}$.\footnote{
We assume that the values for $t$ (also for the elements of $\mathscr{A}$ and $\mathscr{X}$ but this is irrelevant here) are all mutually exclusive, and so $\sum _{k=1}^{j} p_{k}=1$.
Note that we make no assumption regarding whether $t\in \mathscr{A} \cup \mathscr{X} $ and so on whether $\mathscr{V}_{t} = \mathscr{V}_{a_{i}} $ or $ \mathscr{V}_{t} = \mathscr{V}_{x_{i}} $ for some $i$.}
We use $\mathscr{J}^{\mathscr{A}}$ for the set of all the judgments about protected variables, $\mathscr{J}^{\mathscr{X}}$ for the set of all the judgments about non-protected variables, and $\mathscr{J}^{\mathscr{P}}$ for the set of all probabilistic judgments.
Moreover, we use $\sigma ^{\mathscr{A}}$ to express a set of judgments about protected variables such that each element of $\mathscr{A}$ receives at most one value, $\sigma ^{\mathscr{X}}$ to express a set of judgments about non-protected variables such that each element of $\mathscr{X}$ receives at most one value, $\Sigma ^{\mathscr{A}}$ to refer to the set of all $\sigma ^{\mathscr{A}}$, and $\Sigma ^{\mathscr{X}}$ to refer to the set of all $\sigma ^{\mathscr{X}}$.
$\sigma$ is used to express the union of a $\sigma ^{\mathscr{A}}$ and a $\sigma ^{\mathscr{X}}$, and $\Sigma$ is used to refer to the set of all $\sigma$.
More formally:
\[
\Sigma ^{\mathscr{A}} =_{def} \{\sigma ^{\mathscr{A}} \subseteq \mathscr{J}^{\mathscr{A}} \mid \forall i ( a_{i}: \alpha ^{i_{l}} \in \sigma ^{\mathscr{A}} \land a_{i}: \alpha ^{i_{m}} \in \sigma ^{\mathscr{A}} \rightarrow l=m) \}
\]
\[
\Sigma ^{\mathscr{X}} =_{def} \{\sigma ^{\mathscr{X}} \subseteq \mathscr{J}^{\mathscr{X}} \mid \forall i ( x_{i}: \beta ^{i_{l}} \in \sigma ^{\mathscr{X}} \land x_{i}: \beta ^{i_{m}} \in \sigma ^{\mathscr{X}} \rightarrow l=m) \}
\]
\[
\Sigma =_{def} \{\sigma ^{\mathscr{A}} \cup \sigma ^{\mathscr{X}} \mid \sigma ^{\mathscr{A}} \in \Sigma ^{\mathscr{A}} ~ \land ~ \sigma ^{\mathscr{X}} \in \Sigma ^{\mathscr{X}} \}
\]

\noindent
A classifier $\widehat{\mathscr{f}} \in \widehat{\mathscr{F}} $ is a function from $\Sigma$ to $\mathscr{J}^{\mathscr{P}}$, where each $\sigma \in \Sigma$ describes a Data Point, that is what we know about a subject, and the probabilistic judgment $t: \delta ^{1} _{p_{1}}, \ldots ,\delta ^{j} _{p_{j}} $ in $\mathscr{J}^{\mathscr{P}}$ represents the output of the classifier regarding the probability distribution of the possible values for the target variable $t$.

\textbf{TNDPQ} is a proof system working with sequents describing the result of queries for classifiers.
More precisely, each classifier $\widehat{\mathscr{f}}$ is characterized by a set of ground sequents of the form:\footnote{
Technically, we should add a subscript in the equation specifying the classifier we are focusing on.
However, this will not be needed here, since we will not compare outputs of different classifiers.
}

\begin{equation}
\label{eq:judgments}
    \nc{\sigma}{t: \delta _{p}}
\end{equation}

\noindent
For readability reasons, the sequents focus on only one possible value for the target variable at a time.
In \cite{CeragioliPrimiero2025} we show how to extend \textbf{TNDPQ} with sequents working with logically complex judgments -- possibly non-atomic variables receiving possibly non-atomic values.
As an example, the following sequent expresses the probability that a non-white 27 years old woman who is married or divorced receives a loan:
\[
\nc{Age: 27 ,\: Gen.: f,\: MS: married + divorced,\: Etn.: white^{\bot}}{Loan : yes _{0.60}}
\]
\indent \textbf{TNDPQ} was initially designed to investigate the preservation of trustworthiness under the composition of logically simpler queries.
In this paper, we focus only on the atomic fragment and provide a \textit{criterion} to verify individual fairness for a probabilistic classifier via structural properties. Moreover, we address the issue of intersectionality, showing a solution through an extension of the original language with causal relations.

\section{Individual Fairness}

\begin{table*}
\centering
  \caption{
  The table shows the $680$ Data Points in the Training Set that satisfy $\sigma$.
  Of them: 
  $100$ satisfy $a_{1}: \alpha ^{1_{1}} $ and $a_{2}: \alpha ^{2_{1}} $, $90$ of which satisfy also $t:\delta $~; 
  $240$ satisfy $a_{1}: \alpha ^{1_{1}} $ and $a_{2}: \alpha ^{2_{2}} $, $180$ of which satisfy also $t:\delta $~;
  $240$ satisfy $a_{1}: \alpha ^{1_{2}} $ and $a_{2}: \alpha ^{2_{1}} $, $180$ of which satisfy also $t:\delta $~;
  $100$ satisfy $a_{1}: \alpha ^{1_{2}} $ and $a_{2}: \alpha ^{2_{2}} $, $90$ of which satisfy also $t:\delta $~.  
  By summing up the points in each column and in each row we obtain that $79 \%$ of the points satisfy $t:\delta $, so the lack of fairness disappears when at most one of $a_{1} $ or $a_{2} $ is considered.
  }
  \label{tab:IFInter}
  \begin{tabular}{cc||c|c||c}
\toprule
 & & \multicolumn{2}{c}{attribute $a_{1}$} & \\
& & $\alpha ^{1_{1}}$ & $\alpha ^{1_{2}}$ & \\
\midrule
\midrule
\multirow{2}*{attribute $a_{2}$} & $ \alpha ^{2_{1}} $ & $\frac{90}{100}\approx0.90$ & $\frac{180}{240}\approx0.75$ & $\frac{270}{340}\approx0.79$ \\
\cmidrule(lr){2-5}
 							& $ \alpha ^{2_{2}} $ & $\frac{180}{240}\approx0.75$ & $\frac{90}{100}\approx 0.90$ & $\frac{270}{340}\approx0.79$ \\
\midrule
\midrule
&  & $\frac{270}{340}\approx 0.79$ & $\frac{270}{340}\approx 0.79$ &  \\
 \bottomrule
\end{tabular} 
\end{table*}

Individual fairness has different non-equivalent definitions in the literature.
A first simplified characterization of \textit{individual fairness} is as follows:

\begin{definition}[Individual Fairness (\textbf{IF})]
\label{def:IF}
    A classifier is individually fair regarding a set of protected attributes if it gives the same outputs to Data Points differing only for the values of those attributes.
    Formally, $\widehat{\mathscr{f}}$ is \textbf{IF} regarding the set of protected attributes $\{a_{1}, \ldots , a_{n}\}$ iff for every $\sigma ^{\mathscr{X}} \in \Sigma ^{\mathscr{X}} $ and pair of n-tuples of values $\alpha ^{1_{l}}, \ldots , \alpha ^{n_{l}} $ and $ \alpha ^{1_{m}}, \ldots , \alpha ^{n_{m}} $, 
    $\widehat{\mathscr{f}} (\sigma ^{\mathscr{X}} , a_{i}:\alpha ^{1_{l}}, \ldots , a_{i}:\alpha ^{n_{l}} ) = \widehat{\mathscr{f}} (\sigma ^{\mathscr{X}} , a_{i}:\alpha ^{1_{m}}, \ldots , a_{i}:\alpha ^{n_{m}} )$.
\end{definition}

\noindent
Given definition~\ref{def:IF}, in \textbf{TNDPQ} a classifier is \textbf{IF} regarding the set of protected attributes $\{a_{1}, \ldots , a_{n}\}$ iff the ground sequents describing the classifier are such that  
$\nc{\sigma ^{\mathscr{X}}, a_{i}:\alpha ^{1_{l}}, \ldots , a_{i}:\alpha ^{n_{l}}}{t: \delta _{p}}$ iff $\nc{\sigma ^{\mathscr{X}} , a_{i}:\alpha ^{1_{l}}, \ldots , a_{i}:\alpha ^{n_{l}}}{t: \delta _{p}}$, for every $\delta \in \mathscr{V}_{t}$, $\alpha ^{i_{l}},\alpha ^{i_{m}} \in \mathscr{V}_{a_{i}}$, and $\sigma ^{\mathscr{X}} \in \Sigma ^{\mathscr{X}}$.

We consider \textbf{IF} as the best way of approximating \textit{fairness through unawareness} when we have to evaluate opaque systems:

\begin{definition}[Fairness through Unawareness (\textbf{FtU})]
    A classifier is fair through unawareness regarding a protected attribute as long as this attribute is not explicitly used in the decision-making process.
\end{definition}

\noindent
Indeed, while \textbf{FtU} is clearly an \textit{intensional} notion, which can be properly evaluated only by looking at the implemented program, \textbf{IF} can be evaluated just by looking at the inputs and outputs of the machine.
For this reason, we cannot directly evaluate \textbf{FtU} for opaque systems, and \textbf{IF} emerges as a good substitute.

However, there is a problem.
Although intersectionality holds for \textbf{FtU}, it fails for \textbf{IF}.\footnote{
Notice that, even though we focus only on the case of two protected variables, the result generalizes for any set of protected variables.
}

\begin{observation}[Intersectionality fails for \textbf{IF}]
A classifier that is \textbf{IF} with respect to protected attribute $a_{1}$ and protected attribute $a_{2}$ (separately) may not be \textbf{IF} regarding the set $\{a_{1}, a_{2}\}$.

\end{observation}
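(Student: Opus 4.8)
The plan is to establish the observation by exhibiting an explicit counterexample, namely the classifier whose behaviour is recorded in Table~\ref{tab:IFInter}, and checking the three relevant instances of \textbf{IF} by direct computation. First I would fix the reading of \textbf{IF} appropriate to probabilistic queries: a query specifies values for a subset of the attributes and the system returns $P(t:\beta \mid \text{specified values})$, where the unspecified attributes are implicitly marginalised over. Under this reading, \textbf{IF} with respect to a single attribute $a_i$ quantifies over queries that fix $a_i$ but leave the other protected attribute unspecified, whereas \textbf{IF} with respect to the pair $(a_1,a_2)$ quantifies over queries that fix both. Making this distinction explicit is the conceptual crux of the whole argument, because the failure is driven precisely by the gap between these two granularities.

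Second, I would verify \textbf{IF} for $a_1$ alone. Reading the cell counts off the table, the output of a query fixing only $a_1 = \alpha_1^1$ (within $\sigma$) is the column-marginal $\frac{90+180}{100+240} = \frac{270}{340} \approx 0.79$, and fixing only $a_1 = \alpha_1^2$ gives $\frac{180+90}{240+100} = \frac{270}{340} \approx 0.79$ as well; since the two outputs coincide, two data points differing only in the value of $a_1$ (with $a_2$ unspecified) receive the same prediction, so \textbf{IF} holds for $a_1$. By the symmetry of the table the same computation on the rows yields $\approx 0.79$ for both values of $a_2$, establishing \textbf{IF} for $a_2$ as well.

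Third, I would show that \textbf{IF} fails for the pair by producing a single witnessing pair of data points. The joint conditional outputs are $P(t:\beta \mid \alpha_1^1,\alpha_2^1)=\frac{90}{100}=0.90$ and $P(t:\beta \mid \alpha_1^2,\alpha_2^1)=\frac{180}{240}=0.75$; these two data points agree on everything except the values of $(a_1,a_2)$, yet receive different predictions, so the system is not \textbf{IF} with respect to the pair. Together with the previous step this proves the observation.

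I expect the main obstacle to be not the arithmetic but justifying why the disparity visible at the joint level is invisible at each single-attribute level. This is a Simpson-type effect-cancellation phenomenon: the conditional outputs are constant on the diagonal ($0.90$) and on the off-diagonal ($0.75$), while the cell counts are arranged so that each row and each column pairs one high-probability cell of weight $100$ with one low-probability cell of weight $240$. The resulting weighted averages collapse to the same value $\frac{270}{340}$ in every row and column, so the subgroup effects cancel upon marginalisation. I would make explicit that this symmetric arrangement is exactly what the construction needs, and would note that any such counterexample requires conditional dependence of $t$ on $(a_1,a_2)$ together with marginal independence from each of $a_1$ and $a_2$ separately.
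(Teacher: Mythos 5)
Your proposal is correct and takes essentially the same route as the paper: both exhibit the distribution in Table~\ref{tab:IFInter} as the counterexample, check that the row and column marginals all equal $\frac{270}{340}\approx 0.79$ (so \textbf{IF} holds for each attribute separately), and then point to a pair of cells with joint conditionals $0.90$ versus $0.75$ to witness the failure for the pair. Your added discussion of the Simpson-type cancellation and of the marginalisation reading of \textbf{IF} is a useful elaboration, but it does not change the argument, which matches the paper's.
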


\begin{proof}
    To prove failure of intersectionality, we just have to show that for some classifier $\widehat{\mathscr{f}}$, for every $\sigma ^{\mathscr{X}} \in \Sigma ^{\mathscr{X}} $, for every pair of values $\alpha ^{1_{l}} $ and $ \alpha ^{1_{m}}$ in $ \mathscr{V}_{a_{1}}$, and for every pair of values $\alpha ^{2_{l}} $ and $ \alpha ^{2_{m}}$ in $ \mathscr{V}_{a_{2}}$
    \[
    \widehat{\mathscr{f}} (\sigma ^{\mathscr{X}} , a_{1}:\alpha ^{1_{l}}) = \widehat{\mathscr{f}} (\sigma ^{\mathscr{X}} , a_{1}:\alpha ^{1_{m}})
    \]
    \[
    \widehat{\mathscr{f}} (\sigma ^{\mathscr{X}} , a_{2}:\alpha ^{2_{l}}) = \widehat{\mathscr{f}} (\sigma ^{\mathscr{X}} , a_{2}:\alpha ^{2_{m}})
    \]
    \noindent
    but for some pair of sets of values $\alpha ^{1_{l}} , \alpha ^{2_{l}}$ and $\alpha ^{1_{m}} , \alpha ^{2_{m}}$
    \[
    \widehat{\mathscr{f}} (\sigma ^{\mathscr{X}} , a_{1}:\alpha ^{1_{l}}, a_{2}:\alpha ^{2_{l}}) \neq \widehat{\mathscr{f}} (\sigma ^{\mathscr{X}} , a_{1}:\alpha ^{1_{m}}, a_{2}:\alpha ^{2_{m}})
    \]
    We will show that such a classifier is not only theoretically possible, but even quite common when ML systems are trained using Data Sets of a specific kind.

    For simplicity, assume that $a_{1} $ and $a_{2} $ have only two possible outputs each (respectively $\alpha ^{1_{1}}$ and $\alpha ^{1_{2}}$, and $\alpha ^{2_{1}}$ and $\alpha ^{2_{2}}$).
    Let us consider an ML system implementing a learning algorithm with no restriction on protected attributes.
    The system is not \textbf{FtU} regarding these attributes, but can be \textbf{IF} if it is trained using a fair Data Set: that is, a Data Set in which all the Data Points sharing the same value of the non-protected attributes but possibly differing for those of $a_{1} $ or $a_{2} $ share the same value of the target variable.
    In our case, let us assume that the Data Set is fair in this sense and focus on the specific Data Points in table~\ref{tab:IFInter}: these are all the Data Points that satisfy a specific $\sigma ^{\mathscr{X}} \in \Sigma ^{\mathscr{X}}$, with the ratio expressing how many of them give value $\delta$ to target variable $t$.
    We can observe, by looking at the table, that the Data Points are fair when $a_{1}$ and $a_{2}$ are considered separately, and biased when $a_{1}$ and $a_{2}$ are considered together.
    Hence, the ML system will not learn to be \textbf{IF} regarding $\{a_{1},a_{2}\}$.
    As an example: 
    \[\nc{\sigma ^{\mathscr{X}} , a_{1}: \alpha ^{1_{1}}, a_{2}: \alpha ^{2_{1}}}{t: \delta _{0.90}}\]
    \[\nc{\sigma ^{\mathscr{X}} , a_{1}: \alpha ^{1_{1}}, a_{2}: \alpha ^{2_{2}}}{t: \delta _{0.75}}\]
    Note that a different probability associated with just one value of $t$ is sufficient to disprove \textbf{IF} for the set of variables $\{a_{1},a_{2}\}$ and so the proof is complete.
\end{proof}

Lack of intersectionality is even more serious than it could seem.
Indeed, if intersectionality fails, fairness can be gerrymandered by cherry picking protected attributes, so this property contributes to make \textbf{IF} relevant even if the focus is only on single protected attributes \cite{pmlr-v80-kearns18a,Intersect2025}.
Moreover, intersectionality also fails for more elaborated notions of individual fairness which implement a metric for similarity of Data Points and require similar predictions for similar points \cite{kusner2017a,Asher2022}.
In fact, while a shared assumption in the existing literature about \textbf{IF} is that all features of the Data Points are mutually independent, in the next section we argue that the causal relations among such features must be taken into account in order to check intersectionality for opaque systems. 

\section{IF and intersectionality as Weakening}

Since \textbf{IF} and intersectionality require that the probability of a sequent does not change when different values are attributed to protected variables, both these properties can be seen as equivalent to a restricted rule of \textit{Weakening}:\footnote{Notice that intersectionality is addressed by considering $\sigma$ and not only $\sigma ^{\mathscr{X}}$.}

\begin{equation}
\label{rule:Weak1}
    \AxiomC{$ \nc{\sigma  }{t:\delta _{p}}$}
	\RightLabel{{\tiny $Weakening ^{*}$}}
	\UnaryInfC{$ \nc{\sigma , a: \alpha }{t:\delta _{p}} $}
    \DisplayProof
\end{equation}

\noindent
Hence, since \textbf{TNDPQ} is non-monotonic, we need to provide the \textit{condition} under which this rule is valid, in order to deal with \textbf{IF} and intersectionality.
As an example, the following inference establishes \textbf{IF} regarding the protected attribute \textit{gender} (variable $Gen.$) and an instance of intersectionality in case also \textit{marital status} ($MS$) is considered protected:

\begin{prooftree}
    \AxiomC{$ \nc{Age: 27 ,\: MS: married + divorced,\: Etn.: white^{\bot}}{Loan : yes _{0.60}}$}
	\RightLabel{{\tiny $Weakening ^{*}$}}
	\UnaryInfC{$ \nc{Age: 27 ,\: Gen.: f,\: MS: married + divorced,\: Etn.: white^{\bot}}{Loan : yes _{0.60}}$}
\end{prooftree}

Since \textbf{TNDPQ} is a probabilistic system, \textit{Weakening} is valid when the active variable in the rule ($a$) and the target variable ($t$) are mutually independent, conditional on $\sigma$. 

\begin{definition}[Conditional Independence]
$t$ and $a$ are independent, conditional on $\sigma$, iff 
\[
P(t: \delta \mid a: \alpha , \sigma) = P(t: \delta \mid \sigma)
\]
\end{definition}

\noindent
Therefore, what we need is a \textit{criterion} of conditional independence.

In a logic that ignores the relations between the elements of $\sigma$, conditional independence can be decided only using brute-force methods, i.e. by checking statistical correlations for all values of any variable.
Moreover, we have no way of distinguishing good correlations from spurious ones which may emerge due to biases in the Training Set.
In contrast, a system extended with the basic vocabulary of causal graphs \cite{pearl2017} can describe directly causal relations between the features of the classifier.
Hence, we translate in our calculus some well-studied conditions for independence, obtaining admissibility \textit{criteria} for Weakening, in turn establishing \textbf{IF} and intersectionality.
For the purposes of this work, we define causal graphs as follows:

\begin{definition}[Causal Graph]
    A causal graph is an acyclic directed graph with nodes representing events (variables receiving values) and edges representing immediate causal relations. 
\end{definition}

\noindent
By closing edges under transitivity, we obtain the notion of mediate cause.
For purely formal reasons, we close the notion of cause under reflexivity as well.
The usual extension of deterministic causal graphs with functions to compute the value of a node on the basis of those of all the immediate parent nodes is here expressed by sequents like in equation~\ref{eq:judgments}.
Moreover, the common distinction between exogenous and endogenous nodes, relevant in discussing interventions and counterfactual fairness, is left for further research. 

Two nodes which are one the immediate cause of the other are mutually dependent. While for two nodes which are not directly connected, dependence is defined in three steps \cite{pearl2017}:
\begin{itemize}
    \item The \textit{criteria} in figure~\ref{fig:causal} deal with the easiest cases possible, when there is only one intermediate node between the two;
    \item We define a path as blocked by a set of nodes iff, when all and only these nodes occur in the condition, at least one chain, fork or collider in the path has independent nodes;
    \item Two nodes are independent iff all the paths between them are blocked.
\end{itemize}

\begin{figure}
\centering
\begin{subfigure}[b]{0.3\textwidth}
\centering
\captionsetup{width=0.95\textwidth}
\begin{tikzpicture}[-latex ,auto ,node distance =1 cm and 1.5cm ,on grid ,
semithick ,
state/.style ={ circle ,top color =white , bottom color = black!20 ,
draw, text=black , minimum width =0.5 cm}]

\node[state] (X) {i};
\node[state] (Y) [right =of X] {j};
\node[state] (Z) [right=of Y] {k};

\path (X) edge (Y);
\path (Y) edge (Z);

\end{tikzpicture}
\caption{\textbf{Chain:} $i$ and $k$ are dependent, but independent conditional on $j$.}
\end{subfigure}
\begin{subfigure}[b]{0.3\textwidth}
\centering
\captionsetup{width=0.95\textwidth}
\begin{tikzpicture}[-latex ,auto ,node distance =1 cm and 1.5cm ,on grid ,
semithick ,
state/.style ={ circle ,top color =white , bottom color = black!20 ,
draw , text=black , minimum width =0.5 cm}]

\node[state] (Y) {i};
\node[state] (X) [above right =of Y] {j};
\node[state] (Z) [below right =of X] {k};

\path (X) edge (Y);
\path (X) edge (Z);

\end{tikzpicture}
\caption{\textbf{Fork:} $i$ and $k$ are dependent, but independent conditional on $j$.}
\end{subfigure}
\begin{subfigure}[b]{0.3\textwidth}
\centering
\captionsetup{width=0.95\textwidth}
\begin{tikzpicture}[-latex ,auto ,node distance =1 cm and 1.5cm ,on grid ,
semithick ,
state/.style ={ circle ,top color =white , bottom color = black!20 ,
draw , text=black , minimum width =0.5 cm}]

\node[state] (X) {i};
\node[state] (Z) [below right =of X] {j};
\node[state] (Y) [above right=of Z] {k};

\path (X) edge (Z);
\path (Y) edge (Z);

\end{tikzpicture}
\caption{\textbf{Collider:} $i$ and $k$ are dependent conditional on $j$ or any of its descendant, independent otherwise.}
\end{subfigure}
\caption{Elementary compositions of nodes in causal graphs and \textit{criteria} of conditional independence.}
\label{fig:causal}
\end{figure}

To express the \textit{criteria} for conditional independence and thus for the applicability of the rule of Weakening, we need to internalize the causal notions in our calculus \textbf{TNDPQ}.
For this purpose, we use the methodology of labeled calculi \cite{Vigano2000,NegriVonPlato2011}.
First, we extend the language with the following relational predicates for variables:

\begin{description}
\item[Immediate Causal Relations] $ v_{i} \tright v_{j} =_{def}$ $v_{i}$ is an immediate cause of $v_{j}$.

\item[Mediate Causal Relations] $ v_{i} \btright ^{M} v_{j} =_{def}$ $v_{i}$ is a mediate cause of $v_{j}$, with intermediate nodes $M$.

\item[Path with Intermediate Nodes] $ v_{i} \bigDiamond ^{M}_{N} v_{j} =_{def}$ a path exists between $v_{i}$ and $v_{j}$ passing through non-colliders $M$ and colliders or sets of their descendants $N$.
\end{description}

\noindent
then, we reformulate \textbf{TNDPQ} sequents by extending their left-hand side with causal relations:

\begin{equation}
    \nc{\tright _{\widehat{\mathscr{f}}}~, \sigma}{t: \delta _{p}}
\end{equation}

\noindent
Let us use $Var_{\sigma}$ to indicate the set of variables that occur in $\sigma$. We use $\tright _{\widehat{\mathscr{f}}}~$ to indicate all the immediate causal relations among features in the classifier. $\bigDiamond _{\widehat{\mathscr{f}}}~$ denotes all the existing paths in the resulting graph and is derivable as the closure of $\tright _{\widehat{\mathscr{f}}}~$ under the rules in table~\ref{tab:Correl}.

To see how these rules work, consider the figure~\ref{fig:path}.
The rules for mediate causal relations are used to identify the descendants of colliders.
In this case, applying \textbf{reflexive cause} and \textbf{transitive cause} we obtain: $1 \btright ^{\{1,1b,1c\}} 1c$, $3 \btright ^{\{3,3b,3c\}} 3c$, and $3 \btright ^{\{3,3b,3d\}} 3d$.
The rules \textbf{chain}, \textbf{fork}, and \textbf{collider} are used to represent triplets of nodes between $a$ and $t$, with the rule for collider representing also the descendants.
Chains and forks are stored in the superscript: $1 \bigDiamond ^{\{2\}} 3$, $3 \bigDiamond ^{\{4\}} t$.
The colliders are stored in the subscript, together with the set of their descendants: $a \bigDiamond _{\{1,\{1,1b,1c\}\}} 2$, $2 \bigDiamond _{\{3,\{3,3b,3c\}\}} 4$, and $2 \bigDiamond _{\{3,\{3,3b,3d\}\}} 4$.\footnote{
Notice that the collider occurs both by itself and in the set of its descendants: its occurrence as collider is needed for the rule of transitivity, and its occurrence in the set of its descendants is needed for the condition of applicability of the rule~(\ref{rule:Weak2}).
}
\textbf{Transitivity} is used to combine these triplets to construct the paths between $a$ and $t$.
In this case, we have two paths: $a \bigDiamond ^{\{2,4\}}_{\{1,\{1,1b,1c\},3,\{3,3b,3c\}\}} t $, and $ a \bigDiamond ^{\{2,4\}}_{\{1,\{1,1b,1c\},3,\{3,3b,3d\}\}} t $.\footnote{
We focus only on the maximal sets of descendants, although technically also the paths containing only some of the descendants are constructible.
Notice that this does not cause problems with the conditions of rule~(\ref{rule:Weak2}).}

\begin{figure}
\centering
\begin{tikzpicture}[-latex ,auto ,node distance =1 cm and 1.5cm ,on grid ,
semithick ,
state/.style ={ circle ,top color =white , bottom color = black!20 ,
draw , text=black , minimum width =0.5 cm}]

\node[state] (a) {a};
\node[state] (1) [right =of a] {1};
\node[state] (2) [right=of 1] {2};
\node[state] (3) [right=of 2] {3};
\node[state] (4) [right =of 3] {4};
\node[state] (t) [right=of 4] {t};

\node[state] (1b) [below=of 1] {1b};
\node[state] (1c) [below=of a] {1c};

\node[state] (3b) [below=of 3] {3b};
\node[state] (3c) [below =of 2] {3c};
\node[state] (3d) [below =of 4] {3d};

\path (a) edge (1);
\path (2) edge (1);
\path (2) edge (3);
\path (4) edge (3);
\path (t) edge (4);

\path (1) edge (1b);
\path (1b) edge (1c);

\path (3) edge (3b);
\path (3b) edge (3c);
\path (3b) edge (3d);

\end{tikzpicture}
\caption{Paths between nodes $a$ and $t$.}
\label{fig:path}
\end{figure}

Note that in this calculus variables play the same role as labels in labeled calculi, with $\tright _{\widehat{\mathscr{f}}}~$ making explicit accessibility relations.
As an example, the sequent expressing that the probability that a 27 years old person with a gross annual income of $40.000 €$ receives a loan is 60\%, is formulated as follows:
\[
\nc{Age \tright MS ,Age \tright GAI , Age \tright Loan, GAI \tright Loan, Age: 27 ,\: GAI: 40K €}{Loan : yes _{0.60}}
\]

\begin{table*}[h!]
  \caption{Rules to derive paths $\bigDiamond _{\widehat{\mathscr{f}}}~$ from immediate causal relations $\tright _{\widehat{\mathscr{f}}}$.
  The rule of Transitivity has the condition that $i\in O \cup P$ and $j\in M \cup N$.}
  \label{tab:Correl}
\begin{tabular}{l l} 
\textbf{Reflexive cause} $\vdash v_{i} \btright ^{\{i\}} ~ v_{i}$ 
&
\textbf{Transitive cause} $v_{i} \btright ^{N} ~ v_{j},  v_{j} \tright v_{k} \vdash v_{i} \btright ^{N \cup \{ k\} } ~ v_{k} $ 
\\[0.2cm] 
\textbf{Chain} $v_{i} \tright v_{j}, v_{j} \tright v_{k} \vdash v_{i} \bigDiamond ^{\{j\}} v_{k} $
&
\textbf{Fork} $v_{j} \tright v_{i}, v_{j} \tright v_{k} \vdash v_{i} \bigDiamond ^{\{j\}} v_{k} $
\\[0.2cm] 
\textbf{Collider} $ v_{i} \tright v_{j}, v_{k} \tright v_{j}, v_{j} \btright ^{N} ~ v_{z} \vdash v_{i} \bigDiamond _{\{j,N\}} v_{k}$ 
&
\textbf{Transitivity$^{*}$} $v_{x} \bigDiamond ^{M}_{N} v_{i}, v_{j} \bigDiamond ^{O}_{P} v_{y} \vdash v_{x} \bigDiamond ^{M\cup O}_{N \cup P} v_{y}$
\end{tabular} 
\end{table*}

\indent 
With these technical tools in place, we formulate an applicability \textit{criterion} for the \textit{Weakening} rule in equation~\ref{rule:Weak2}: 

\begin{equation}
\label{rule:Weak2}
    \AxiomC{$\nc{\tright_{\widehat{\mathscr{f}}}~,\ \sigma}{t:\delta _{p}}$}
	\RightLabel{{\tiny $Weakening ^{*}$}}
	\UnaryInfC{$\nc{\tright _{\widehat{\mathscr{f}}}~,\ \sigma,\  a:\alpha}{t:\delta _{p}}$}
\DisplayProof
\end{equation}

\noindent
With \textit{Conditions} consisting of:
\begin{description}
\item[Condition1:] $a \not\triangleright ~t$ and $t \not\triangleright ~a$;
\item[Condition2:] For every $a \bigDiamond ^{M}_{N} t$ in $\bigDiamond_{\widehat{\mathscr{f}}}~$ , either $M \cap Var_{\sigma} \neq \emptyset$ or $\exists S \in N(\exists x \in S \land S \cap Var_{\sigma} = \varnothing)$.
\end{description}

\noindent
Hence, to decide whether Weakening can be applied, we check both $\tright _{\widehat{\mathscr{f}}}~$ and $\bigDiamond_{\widehat{\mathscr{f}}}~$ to evaluate conditional independence.
In particular, the first condition requires that protected variable and target variable are not one the direct cause of the other, and the second condition requires that, for every path connecting them, $\sigma$ blocks it, by containing at least one non-collider or by not containing a collider and all of its descendants.
Note that this rule can be used to decide both \textbf{IF} in general and intersectionality, which corresponds to cases in which $\sigma$ already contains a protected attribute.
More precisely, what is obtained by checking the admissibility of an instance of Weakening is an evaluation of fairness (and possibly intersectionality) of the classifier, when a specific set of attributes are used to decide a target variable.
Figure~\ref{fig:loan} shows a simple example of application of this rule.

\begin{figure}[h!]
\begin{subfigure}[b]{0.33\textwidth}
\captionsetup{width=0.95\textwidth}
\begin{tikzpicture}[-latex ,auto ,node distance =1.5 cm and 1.5cm ,on grid ,
semithick ,
state/.style ={ circle ,top color =white , bottom color = black!20 ,
draw , text=black , minimum width =0.5 cm}]

\node[state] (X) {Age};
\node[state] (Z) [below right =of X] {Loan};
\node[state] (Y) [above right=of Z] {GAI};
\node[state] (W) [below left=of X] {MS};

\path (X) edge (Z);
\path (Y) edge (Z);
\path (X) edge (Y);
\path (X) edge (W);

\end{tikzpicture}
\caption{Causal graph of the classifier.}
\end{subfigure}
\begin{subfigure}[b]{0.65\textwidth}
\captionsetup{width=0.95\textwidth}
\begin{flushright}
\AxiomC{$\nc{\tright _{\widehat{\mathscr{f}}}~, Age: 27 ,\: GAI: 40K €}{Loan : yes _{0.60}}$}
	\RightLabel{{\tiny $W ^{*}$}}
	\UnaryInfC{$\nc{\tright _{\widehat{\mathscr{f}}}~, Age: 27 ,\: GAI: 40K €,\: MS: m}{Loan : yes _{0.60}}$}
    \DisplayProof
\end{flushright}
\caption{Application of $W$ with attribute $MS$ receving value $m$ (married).
The satisfaction of the conditions can be observed from the causal graph, and is derivable from the set $\tright _{\widehat{\mathscr{f}}}$.}
\end{subfigure}
\caption{Example of application of Weakening.}
\label{fig:loan}
\end{figure}

\section{Conclusion}

This work focuses on formal tools to check fairness of probabilistic classifiers.
We have shown that, without taking into account the causal relations between the features of the classifier, intersectional fairness is not guaranteed.
The proposed typed natural deduction calculus \textbf{TNDPQ} has labels representing causal relations, and it provides a criterion of applicability for the rule of Weakening that establishes both fairness and intersectionality.
An extension using causal labels to express counterfactual fairness is left for further work.

\section*{Acknowledgments}

This research was supported by the Ministero dell’Università e della Ricerca (MUR) through PRIN 2022 Project SMARTEST – Simulation of Probabilistic Systems for the Age of the Digital Twin (20223E8Y4X), and through the Project “Departments of Excellence 2023-2027” awarded to the Department of Philosophy “Piero Martinetti” of the University of Milan.

\section*{Declaration on Generative AI}

 During the preparation of this work, the authors used Grammarly in order to: Grammar and spelling check. After using these tool, the authors reviewed and edited the content as needed and take full responsibility for the publication’s content.

\bibliography{biblio}

\begin{thebibliography}{}

\bibitem[Asher et~al., 2022]{Asher2022}
Asher, N., De~Lara, L., Paul, S., and Russell, C. (2022).
\newblock Counterfactual models for fair and adequate explanations.
\newblock {\em Machine Learning and Knowledge Extraction}, 4(2):316--349.

\bibitem[Ceragioli and Primiero, tted]{CeragioliPrimiero2025}
Ceragioli, L. and Primiero, G. (submitted).
\newblock Trustworthiness preservation by copies of machine learning systems.

\bibitem[Coraglia et~al., 2023]{DBLP:conf/beware/CoragliaDGGPPQ23}
Coraglia, G., D'Asaro, F.~A., Genco, F.~A., Giannuzzi, D., Posillipo, D.,
  Primiero, G., and Quaggio, C. (2023).
\newblock Brioxalkemy: a bias detecting tool.
\newblock In Boella, G., D'Asaro, F.~A., Dyoub, A., Gorrieri, L., Lisi, F.~A.,
  Manganini, C., and Primiero, G., editors, {\em Proceedings of the 2nd
  Workshop on Bias, Ethical AI, Explainability and the role of Logic and Logic
  Programming co-located with the 22nd International Conference of the Italian
  Association for Artificial Intelligence (AI*IA 2023), Rome, Italy, November
  6, 2023}, volume 3615 of {\em {CEUR} Workshop Proceedings}, pages 44--60.
  CEUR-WS.org.

\bibitem[Coraglia et~al., 2024]{coraglia2024evaluatingaifairnesscredit}
Coraglia, G., Genco, F.~A., Piantadosi, P., Bagli, E., Giuffrida, P.,
  Posillipo, D., and Primiero, G. (2024).
\newblock Evaluating ai fairness in credit scoring with the brio tool.

\bibitem[D’Asaro et~al., 2025]{10.1093/logcom/exaf003}
D’Asaro, F.~A., Genco, F.~A., and Primiero, G. (2025).
\newblock Checking trustworthiness of probabilistic computations in a typed
  natural deduction system.
\newblock {\em Journal of Logic and Computation}, page exaf003.

\bibitem[Kearns et~al., 2018]{pmlr-v80-kearns18a}
Kearns, M., Neel, S., Roth, A., and Wu, Z.~S. (2018).
\newblock Preventing fairness gerrymandering: Auditing and learning for
  subgroup fairness.
\newblock In Dy, J. and Krause, A., editors, {\em Proceedings of the 35th
  International Conference on Machine Learning}, volume~80 of {\em Proceedings
  of Machine Learning Research}, pages 2564--2572. PMLR.

\bibitem[Kubyshkina and Primiero, 2024]{KUBYSHKINA2024109212}
Kubyshkina, E. and Primiero, G. (2024).
\newblock A possible worlds semantics for trustworthy non-deterministic
  computations.
\newblock {\em International Journal of Approximate Reasoning}, 172:109212.

\bibitem[Kusner et~al., 2017]{kusner2017a}
Kusner, M., Loftus, J., Russell, C., and Silva, R. (2017).
\newblock Counterfactual fairness.
\newblock volume~30, pages 4067--4077. Massachusetts Institute of Technology
  Press.

\bibitem[Negri and von Plato, 2011]{NegriVonPlato2011}
Negri, S. and von Plato, J. (2011).
\newblock {\em Proof Analysis: A Contribution to Hilbert’s Last Problem}.
\newblock Cambridge University Press, Cambridge.

\bibitem[Nielsen et~al., 2025]{Intersect2025}
Nielsen, M., Gissi, E., Heidari, S., Horton, R., Nadeau, K., Ngila, D., Noble,
  S., Paik, H., Tadesse, G., Zeng, E., Zou, J., and Schiebinger, L. (2025).
\newblock Intersectional analysis for science and technology.
\newblock {\em Nature}, 640(8058):329--337.

\bibitem[Pearl et~al., 2017]{pearl2017}
Pearl, J., Glymour, M., and Jewell, N. (2017).
\newblock {\em Causal Inference in Statistics: A Primer}.
\newblock Wiley.

\bibitem[Vigan\`o, 2000]{Vigano2000}
Vigan\`o (2000).
\newblock {\em Labelled Non-Classical Logics}.
\newblock Springer, New York.

\end{thebibliography}


\end{document}